\documentclass[12pt]{amsart}
\usepackage[foot]{amsaddr}
\usepackage{amsmath,amsthm,amssymb}
\usepackage[margin=0.8in]{geometry}
\usepackage{mathtools}
\usepackage{ulem}
\usepackage{xcolor}
\usepackage{tikz}
\usepackage{xcolor}
\definecolor{dblue}{RGB}{20,66,129}
\usepackage{graphicx}
\graphicspath{{Figures/}}
\usetikzlibrary{positioning}
\usetikzlibrary{intersections}

\numberwithin{equation}{section}

\newtheorem{theorem}{Theorem} 
\newtheorem{lemma}{Lemma}

\newtheorem{corollary}{Corollary}
\newtheorem{proposition}{Proposition}
\newcommand{\R}{\mathbb{R}} 
\newcommand{\Z}{\mathbb{Z}} 

\newcommand{\In}{{\rm{in}}}

\usepackage{hyperref}
\hypersetup{hidelinks}

\title{Competition in the Nutrient-Driven Self-Cycling Fermentation
Process}

\date{}

\author{Stacey R. Smith?$^1$ \and Tyler Meadows$^2$ \and Gail S.K. Wolkowicz$^3$}

\address[1]{Department
of Mathematics and Faculty of Medicine, University of Ottawa.
The work of this author was partially supported by the Natural Sciences and Engineering Research Council of Canada. This research is based on
part  of this author's Ph.D. thesis at McMaster University.}
\address[2]{Department of Mathematics, Queen's University}
\address[3]{Department of Mathematics and
Statistics, McMaster University.  
The work of this author was partially supported by the Natural Sciences and Engineering Research Council of Canada.}
\begin{document}

\begin{abstract} Self-cycling fermentation is an automated process used
for culturing microorganisms. We consider a model of $n$ distinct species competing
	for a single non-reproducing nutrient in a self-cycling fermentor in
	which the nutrient level is used as the decanting condition. The
model is formulated in terms of impulsive ordinary differential equations.
	We prove that two species are able to coexist in the fermentor under certain conditions. We also provide numerical simulations that
suggest coexistence of three species is possible and 
that competitor-mediated coexistence can occur in this case. These results are in contrast to the chemostat, the continuous analogue, where multiple species cannot coexist on a single nonreproducing nutrient.

\noindent   
{\bf Keywords.} Self-cycling fermentation; impulsive differential
equations; resource competition; competitor-mediated coexistence; microbial dynamics. 
\end{abstract}
\maketitle

\section{Introduction}
Self-cycling fermentation (SCF) is a technique used to culture microorganisms. In
this process, a tank is filled with a liquid medium that contains all the nutrients
required for microbial growth. The medium is inoculated with microorganisms that  use the nutrient to grow and reproduce. The contents of the tank are carefully monitored by a computer, and when predefined conditions (called the decanting criteria) are met, the computer then instigates a rapid
emptying and refilling process, called a decanting process. During the decanting process, a set fraction of the contents of the tank
is removed and replaced by an equal volume of fresh
medium. Once the fresh medium has been added to the tank, the process begins anew,
with the microorganism consuming the new medium until the decanting criteria are met again. Under the right
conditions, this cycling continues indefinitely, and the process does not
require an operator or any estimate of the natural cycle time of the
microorganisms in advance.

SCF was originally developed as a method to cultivate synchronized cultures of
bacteria; i.e., cultures in which all cells are the same age
\cite{BC1,sheppard1990development}. The process quickly found use in wastewater
treatment~\cite{HC,nguyen2000application,SaC}, where the decanting criteria could be
set so that the treated medium conformed to standards set by environmental-protection
agencies.  A two-stage variation on the SCF process has been used for bacteriophage
cultivation~\cite{sauvageau2010two}. Bacteriophages have been identified as useful
biomedical tools,  not only in the application of phage therapy
\cite{housby2009phage}, but also in bacterial control~\cite{kocharunchitt2009use} and the production of recombinant proteins for drug delivery~\cite{oh2007construction}. SCF has also shown promise as a method to produce some biologically derived compounds such as shikimic acid~\cite{tan2021transcriptomic}, which is an important component of the antiviral drug Oseltamivir, and cellulosic ethanol~\cite{wang2021co,wang2020improved}, which is a type of biofuel produced from otherwise unusable plant fibres. 

The original model of SCF was developed using the dissolved oxygen concentration as the decanting condition~\cite{WCR}. The nutrient-driven process, which uses a value of the nutrient concentration as the decanting condition, has been analyzed more thoroughly. The initial model of the nutrient-driven SCF process~\cite{SmWo2} was used to determine an optimal decanting fraction to maximize fermentor throughput under the assumption that the fermentor was being used for wastewater treatment. The nutrient-driven SCF model has been extended to investigate the role of cell size~\cite{SmWo1}, to investigate how resources that are inhibitory at high concentrations affect the process~\cite{fan2007analysis} and to investigate how multiple resources affect the long-term dynamics~\cite{hsu2019growth,meadows2020growth}. The model is described using impulsive differential equations, which accurately describe semi-continuous systems when the period being approximated is short compared to the cycle times~\cite{BS1,BS2}. In the case of self-cycling fermentation, the emptying and refilling process is fast compared to the time between such events, making it an ideal process for modelling with impulsive differential equations.

The outcomes of multiple-species competition have been discussed in many other scenarios. In the chemostat with constant input resource concentration and dilution rate, the species that can subsist on the lowest resource concentration will exclude all others~\cite{SW,wolkowicz1992global}. In contrast, an arbitrary number of species are able to coexist in the periodic chemostat, provided that certain conditions are met~\cite{wolkowicz1998n}. Similarly, at least two species have been shown o to coexist in serial transfer cultures~\cite{smith2011bacterial}, which can be thought of as a time-driven self-cycling fermentation process. Many of these theoretical results have also been verified experimentally~\cite{good2017dynamics,hansen1980single}. 

In wastewater systems, operators often want to curate an environment that selects for one species over another~\cite{dutta2015sequencing}. For example, one of the main challenges facing the full-scale implementation of anaerobic ammonium oxidation is the competition between nitrite-oxidizing bacteria and the desired anaerobic-ammonium-oxidizing bacteria~\cite{trinh2021recent}. Similarly, glycogen-accumulating organisms must be excluded from biological phosphate-removal systems since their presence can lead to reduced efficiency or even reactor failure~\cite{shen2016enhanced}. On the other hand, mixed-culture systems show promise as a method to reduce the production costs of some biologically manufactured plastics such as polyhydroxyalkanoates~\cite{nguyenhuynh2021insight}. Therefore, a solid theoretical understanding of the mechanisms that lead to the coexistence of multiple species or competitive exclusion is important in order to achieve desired outcomes. 


This paper is organised as follows. In Section \ref{nspecies}, we introduce
the model for $n$ species competing for a single limiting nutrient. In
Section \ref{twospecies}, we consider a simplified version of the model
with two species, run some numerical simulations that suggest
coexistence under certain conditions, and present our main theorem. We
prove that two species can coexist on a single nonreproducing nutrient,
under certain conditions. In Section \ref{threespecies}, we display
numerical simulations that suggest three species can also survive on a
single limiting nutrient, and we demonstrate that such survival is an
example of competitor-mediated coexistence. In Section \ref{discussion}, we discuss the
implications of the results. The proofs of all of the results
can be found in Appendix~\ref{proofs}. 


\section{A model for $n$ competing species}
\label{nspecies}
For a given function $z(t)$ and time $t$, let
$ z(t^-)\equiv \lim_{t\to t^-} z(t)$ and $z(t^+)\equiv\lim_{t\to t^+} z(t)$. We consider the following model for $n$ species competing for a single growth-limiting nutrient in a nutrient-driven self-cycling fermentor:
\begin{subequations}\label{competition}
\begin{align}
\begin{drcases}\begin{aligned}
\frac{ds}{dt} &= - \sum_{j=1}^n\frac{x_j f_j(s)}{Y_j} \\
\frac{dx_j}{dt} &= x_j(f_j(s)-d_j) \quad\qquad
j=1,\ldots, n\end{aligned} \end{drcases}
&& s(t)&\ne\overline{s}\label{competition_continuous} \\ \begin{drcases}\begin{aligned}
s(t^+) &= rs^\In+(1-r)s(t^-) \\
x_j(t^+) &= (1-r)x_j(t^-)  \ \quad\qquad j=1,\ldots, n 
\end{aligned} \end{drcases} && s(t^-) &= \overline{s}.
\end{align}
\end{subequations}
This model is a generalization of the model described by Smith and Wolkowicz~\cite{SmWo2}. 
Here, $s$ denotes the concentration of nutrient in the
fermentation vessel, $x_j$ is the
biomass of the $j$th population of microorganisms that consume the
nutrient, $Y_j$ is the cell yield constant, $d_j$ is the natural decay rate of the $j$th population,  $\bar s$ is the nutrient concentration that triggers the decanting process, $s^\In$ is the concentration of nutrient in the medium added during the decanting process and $r$ is the fraction of medium removed during the decanting process. We assume that $Y_j>0,\ \ s^\In>\bar{s} > 0,\ \ d_j \ge 0$ and $0<r<1$. We note that by rescaling $x_j$ by the factor $\frac{1}{Y_j}$,
these yield constants can be eliminated from the model. This rescaling is equivalent to setting each yield constant to 1. Thus, we consider this rescaled model for the remainder of the paper.

The functions $f_j:\mathbb{R} \to \mathbb{R}$ describe the rate at which the $j$th species consumes nutrient and converts it to biomass. We assume the $f_j$ are continuously differentiable, monotone non-decreasing and satisfy $f_j(0) = 0$. This class of functions includes the commonly used mass-action and Monod forms~\cite{pervez2018ecological}.
In numerical simulations,  we will use the Monod form for the response functions:
\begin{eqnarray*}
f_j(s) = \frac{m_j s}{K_j + s}, \qquad\qquad j \ = \ 1,\ldots,n,
\end{eqnarray*}
where $m_j$ is the maximum specific growth rate and $K_j$ is the half
saturation constant for the $j$th species. That is, $f_j(K_j) = \frac{1}{2}m_j.$

For each $j\in\{1,\dots,n \} $, let $\lambda_j$ denote the nutrient concentration at which
$f_j(\lambda_j)= d_j$. 
These values are referred to as {\it break-even concentrations}, since if the nutrient level were to be held constant at $\lambda_j$, then the $j$th species would not experience any growth or decay. 

Note that since $s$ is decreasing, if $s(0)<\bar s$, then
$\bar s$ 
is never reached and there will be no impulsive effect. In this case, the system will
approach an initial-condition-dependent equilibrium point with $s = 0$ or $s= s(0)$
if $x_j(0) = 0$ for all $j\in\{1,...,n\}$. We assume, without loss of
generality, that $s(0)> \bar{s}$, so that there is no immediate impulsive effect. For simplicity of notation, define 
\begin{align*}
\bar s^+ \equiv rs^\In+(1-r)\bar s.
\end{align*}

For each $j\in\{1,\dots,n \}$,  let
\begin{align*}
\mu_j &\equiv \int^{\bar{s}^+}_{\bar s}1-\frac{ d_j}{f_j(s)}ds.
\end{align*}
This represents the net growth in the $j$th species throughout one cycle when it is the only species present in the fermentation vessel. 

Throughout, we will make the technical assumption that 
\begin{equation}\label{eq:mu_min}
    \mu_{min} \equiv \int_{\bar{s}}^{\lambda_{max}} \frac{\min_j(f_j(s)-d_j)}{\min_j(f_j(s))}ds +\int_{\lambda_{max}}^{\bar{s}^+} \frac{\min_j(f_j(s)-d_j)}{\max_j(f_j(s))}ds > 0,
\end{equation}
where $\lambda_{max} = \max\{\lambda_1,...,\lambda_n\}$, $\max_j(f_j(s)) =
\max\{f_1(s),f_2(s),...,f_n(s)\}$ and $\min_j(f_j(s)) =
\min\{f_1(s),f_2(s),...,f_n(s)\}$. We note that  if $\mu_{min}>0$, then $\mu_j>0$ for
each $j\in\{1,...,n\}$. Hence, if $n=1$, then $\mu_{min} = \mu_1$. In
particular, this condition is satisfied if each species is selected so that
$\lambda_j \le \bar{s}$ and the growth rate of each species remains positive
throughout each cycle.   
\begin{proposition}\label{prop:constants} Assume the  initial conditions  of 
system
\eqref{competition}  satisfy
$$
s(0)=\bar s^+, \ x_j(0)\geq 0, \  j\in\{1,\ldots,n\},\  \sum_{j=1}^nx_j(0)\neq 
0
$$
and that $\mu_{min}>0$.  Then
all  solutions
remain nonnegative and bounded. If $x_j(0)>0$ for some $j\in \{1,\ldots,n\},$
then $x_j(t)>0,$ for all $t>0.$  
Furthermore, there exists an infinite sequence of times $\{t_k\}_{k\in\mathbb{N}}$ such that $s(t_k^-) = \bar s$ and $t_k \to \infty$ as $k\to\infty.$
\end{proposition}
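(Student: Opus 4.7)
My plan is to treat the four assertions of the proposition in order. For nonnegativity, since $x_j$ appears as a factor in $\dot{x}_j$, the face $\{x_j=0\}$ is invariant between impulses, and similarly $\{s=0\}$ is invariant since $f_j(0)=0$; the impulse maps $x_j\mapsto(1-r)x_j$ and $s\mapsto rs^{\In}+(1-r)s$ send the nonnegative orthant into itself, sending $s$ strictly into $(0,\infty)$ because $s^{\In}>0$. Positivity of $x_j(t)$ whenever $x_j(0)>0$ follows from uniqueness of solutions to the locally Lipschitz ODE ($f_j\in C^1$) combined with $1-r>0$. For boundedness, I would use $w\coloneqq s+\sum_j x_j$, which satisfies $\dot{w}=-\sum_j d_j x_j\le 0$ between impulses and $w(t^+)=rs^{\In}+(1-r)w(t^-)$ at impulses; induction on impulse number gives $w(t)\le M\coloneqq\max\{w(0^+),s^{\In}\}$ uniformly, so all components are bounded by $M$.

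The existence of the infinite impulse sequence is the main content. I would argue by contradiction: suppose no impulse occurs after some time $\tau$, so $s(t)>\bar{s}$ for all $t>\tau$. Positivity of $\sum_j x_j$ together with $f_j(s)>0$ on $(0,\infty)$ forces $\dot{s}<0$, so $s(t)\searrow s^*\in[\bar{s},\bar{s}^+)$. The identity $\int_\tau^\infty\sum_j x_j f_j(s)\,dt=s(\tau)-s^*<\infty$ combined with $f_j(s)\ge f_j(\bar{s})>0$ gives $\int_\tau^\infty x_j\,dt<\infty$ for each $j$. If $s^*>\lambda_j$ for some $j$, then $\tfrac{d}{dt}\ln x_j=f_j(s)-d_j$ is eventually bounded below by a positive constant, so $x_j\to\infty$ and $\int x_j\,dt=\infty$, a contradiction; hence $s^*\le\lambda_{\min}$.

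To rule out this remaining case, I plan to invoke $\mu_{\min}>0$ via the lower estimate
\begin{equation*}
\ln x_j(t)-\ln x_j(\tau)\ge\int_{s(t)}^{\bar{s}^+}\frac{g(s)}{\sum_k x_k(s) f_k(s)}\,ds,\qquad g(s)\coloneqq\min_k\!\left(f_k(s)-d_k\right),
\end{equation*}
obtained from the change of variables $dt=-ds/\sum_k x_k f_k(s)$ together with the pointwise bound $f_j-d_j\ge g$. Splitting the integration at $\lambda_{\max}$ and using $\sum_k x_k f_k\le X\max_k f_k$ on the region where $g\ge 0$ and $\sum_k x_k f_k\ge X\min_k f_k$ on the region where $g<0$ reduces the right-hand side to an expression closely mirroring the definition of $\mu_{\min}$. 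The assumption $\mu_{\min}>0$ should then force the right-hand side to exceed a strictly positive constant as $t\to\infty$, so $x_j$ stays bounded away from zero, contradicting $\int x_j\,dt<\infty$. Finally, $t_k\to\infty$ (no finite accumulation) follows from the uniform estimate $|\dot{s}|\le M\max_j f_j(\bar{s}^+)$, which gives a positive lower bound on each cycle length.

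The main obstacle is making the $\mu_{\min}$ bound rigorous: the factor $1/X(s)$ in the integrand interacts with the sign of $g(s)$ in opposite directions on the two pieces of the partition at $\lambda_{\max}$, so the global a priori bound $X\le M$ is the wrong direction on $[\bar{s},\lambda_{\max}]$. Obtaining the precise constant $\mu_{\min}$ therefore requires a matching lower bound for $X(s)$ in the decay region, which must be propagated from the growth region $[\lambda_{\max},\bar{s}^+]$ where $\dot{X}>0$ (every species satisfies $f_j(s)>d_j$ there). Tracking these bounds carefully through the full cycle is the most technically delicate step.
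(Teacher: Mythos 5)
Your treatment of nonnegativity, strict positivity of $x_j$, boundedness via $w=s+\sum_j x_j$, and the non-accumulation of impulse times is correct (and in fact more explicit than the paper, which essentially takes these points for granted). The gap is in the central step: ruling out the scenario $s(t)\downarrow s^*\in[\bar s,\lambda_{max}]$ with no further impulse. Your route tracks $\ln x_j$ and arrives at the lower bound $\int_{s(t)}^{\bar s^+} g(\sigma)\big/\big(\sum_k x_k f_k(\sigma)\big)\,d\sigma$, and the obstacle you flag at the end is not merely a technical delicacy --- it is fatal to this route as stated. In the hypothetical scenario all $x_k\to 0$, so $\sum_k x_k f_k(\sigma)\to 0$ as $\sigma\downarrow s^*$; the factor $1/\sum_k x_k f_k$ therefore blows up exactly on the portion of the domain where $g(\sigma)=\min_k(f_k(\sigma)-d_k)$ can be negative, and your right-hand side can equal $-\infty$. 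Indeed it \emph{must} be able to diverge to $-\infty$, since $\ln x_j\to-\infty$ is precisely what happens when species $j$ dies out, so no bound on $\ln x_j$ that is uniform in $X=\sum_k x_k$ can exist along these lines. No amount of propagating upper and lower bounds on $X$ between the two subintervals will produce the $X$-free constant $\mu_{min}$.

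The missing idea is to track the total biomass $u=\sum_j x_j$ itself rather than the logarithms of its components. Dividing $du/dt$ by $ds/dt$ gives
\begin{equation*}
\frac{du}{ds}=\frac{\sum_j (f_j(s)-d_j)x_j}{-\sum_j f_j(s)x_j},
\end{equation*}
and the right-hand side is a ratio of two sums with the \emph{same} weights $x_j$: a mediant. Its bounds are therefore independent of the $x_j$, namely $\sum_j(f_j-d_j)x_j\big/\sum_j f_jx_j\ge \min_j(f_j-d_j)\big/\max_j f_j$ where $\min_j(f_j-d_j)\ge 0$ (i.e., on $[\lambda_{max},\bar s^+]$) and $\ge \min_j(f_j-d_j)\big/\min_j f_j$ where it is negative (on $[\bar s,\lambda_{max}]$). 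Integrating $du/ds$ from $\bar s^+$ down to $s^*$ and using $u(s^*)=0$ (which follows from $\int x_j\,dt<\infty$ plus uniform continuity, as in your argument) yields $-u(\bar s^+)=\int_{s^*}^{\bar s^+}\frac{\sum_j(f_j-d_j)x_j}{\sum_j f_jx_j}\,d\sigma\ge\mu_{min}>0$, contradicting $u(\bar s^+)>0$. This is exactly how the paper closes the argument; it also disposes of the case $s^*\ge\lambda_{max}$ at once (the integrand is then nonnegative throughout), so your separate exponential-growth step, while correct, becomes unnecessary.
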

The conditions of Proposition \ref{prop:constants} ensure that each species is capable of surviving in the fermentor on their own and that the fermentor will cycle indefinitely. In the case where only a single species is present initially (i.e., $x_\ell(0) >0$ for some $\ell\in\{1,...,n\}$ and $x_j(0) = 0 $ if $j\ne \ell$), then model \eqref{competition} reduces to the model studied in~\cite{SmWo2}. We summarize the main results of that paper in the following proposition.
\begin{proposition}[Smith \& Wolkowicz~\cite{SmWo2}]\label{growth1} 
 Fix $\ell\in \{1,2,\dots,n\}.$
 Assume that the  initial conditions  of system \eqref{competition}
satisfy 
$$
s(0)=\bar s^+,  \  x_j(0)=0 \ \text{for}\  j\in \{1,\ldots,n\},   \ j \neq \ell, \
x_\ell(0)>0,
$$
and that $\mu_\ell>0$.
\begin{enumerate}
\item
There exists a unique nontrivial  periodic orbit. This periodic 
orbit
has  exactly one impulse per period and is globally asymptotically 
stable.
\item
At the times of impulse $\{t_k\}_{k\in\mathbb{N}}$, the periodic orbit 
satisfies
\begin{align*}
s(t_k^-)&=\bar s, & s(t_k^+)&=\bar{s}^+, \\
x_\ell(t_k^-) &= \frac{1}{r} \mu_\ell, &
x_\ell(t_k^+)&=
\frac{1-r}{r} \mu_\ell.
\end{align*}
\end{enumerate}
\end{proposition}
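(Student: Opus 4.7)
The plan is to reduce system \eqref{competition} to its one-species subsystem, obtain a closed-form relation between $x_\ell$ and $s$ between impulses, and then analyze the resulting impulse-to-impulse Poincar\'e map on $x_\ell$, which will turn out to be an affine contraction with a unique globally attracting fixed point.

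Since $x_j(0)=0$ for $j\neq\ell$, these coordinates remain zero and the system reduces to $\dot s=-x_\ell f_\ell(s)$, $\dot x_\ell=x_\ell(f_\ell(s)-d_\ell)$ together with the impulsive rule at $s=\bar s$. By Proposition~\ref{prop:constants}, which applies because $\mu_{\min}=\mu_\ell>0$ in the one-species case, there is an infinite sequence of impulse times $\{t_k\}$ with $s(t_k^-)=\bar s$, $t_k\to\infty$, and $x_\ell(t)>0$ for all $t>0$. Between consecutive impulses $s$ is strictly decreasing from $\bar s^+$ to $\bar s$ (as $f_\ell>0$ on $(\bar s,\bar s^+]$), so I would view $x_\ell$ as a function of $s$ along each trajectory and divide the two ODEs to obtain the orbit equation
\begin{align*}
\frac{dx_\ell}{ds} = -1+\frac{d_\ell}{f_\ell(s)}.
\end{align*}
Integrating from $s=\bar s^+$ just after the $k$th impulse down to $s=\bar s$ just before the $(k+1)$st yields
\begin{align*}
x_\ell(t_{k+1}^-)-x_\ell(t_k^+) = \int_{\bar s}^{\bar s^+}\Bigl(1-\frac{d_\ell}{f_\ell(u)}\Bigr)\,du = \mu_\ell,
\end{align*}
so the within-cycle growth is exactly the constant $\mu_\ell$, independent of the biomass at the start of the cycle.

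Combining this identity with the impulse rule $x_\ell(t_k^+)=(1-r)x_\ell(t_k^-)$ and writing $X_k:=x_\ell(t_k^+)$ produces the affine recurrence
\begin{align*}
X_{k+1}=(1-r)(X_k+\mu_\ell),
\end{align*}
whose unique fixed point is $X^*=(1-r)\mu_\ell/r$. Since $0<1-r<1$, the iterates satisfy $X_k-X^*=(1-r)^k(X_0-X^*)\to 0$ for every initial value $X_0\geq 0$, and the corresponding pre-impulse value is $X^*+\mu_\ell=\mu_\ell/r$, matching the values claimed in item (2). Uniqueness of the nontrivial periodic orbit follows from uniqueness of $X^*$, and global asymptotic stability follows from the geometric convergence of the Poincar\'e map together with continuous dependence of the ODE on its initial data on each cycle interval.

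The main point worth verifying carefully, rather than any individual estimate, is the legitimacy of the reduction: one must know that $s$ strictly decreases to $\bar s$ in finite time on each interval with $x_\ell$ still positive, so that the change of variables $x_\ell=x_\ell(s)$ is well defined and the Poincar\'e map exists for every initial condition. Both facts are built into Proposition~\ref{prop:constants} under the hypothesis $\mu_\ell>0$, which is why I would invoke that result at the outset. Everything after that amounts to tracking a one-dimensional affine map, and the argument becomes entirely elementary.
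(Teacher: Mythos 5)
Your argument is correct: dividing the two equations of the single-species subsystem gives $dx_\ell/ds=-1+d_\ell/f_\ell(s)$, whose integral over $(\bar s,\bar s^+)$ is exactly the paper's $\mu_\ell$, and the resulting affine contraction $X_{k+1}=(1-r)(X_k+\mu_\ell)$ yields the unique fixed point $\frac{1-r}{r}\mu_\ell$ and geometric convergence, with Proposition~\ref{prop:constants} supplying the needed fact that each cycle terminates in finite time. Note that the paper itself gives no proof of this proposition --- it is imported verbatim from Smith and Wolkowicz~\cite{SmWo2} --- but your derivation is precisely the standard one underlying that reference, as is evident from the paper's own interpretation of $\mu_\ell$ as the net per-cycle growth; the only point you leave implicit is that a periodic orbit with $m>1$ impulses per period would be a period-$m$ point of the same affine contraction, which cannot exist, so uniqueness holds among all nontrivial periodic orbits and not just those with one impulse per period.
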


\section{Two-species competition in the self-cycling fermentation
process}\label{twospecies}

In this section, we consider pairwise competition between different species.
We assume $\mu_{min}>0$ so that each species is capable of surviving in the fermentor if other species are not present. In the event that one of the species is a strictly better competitor than another species, then the worst competitor will be driven to extinction. 
\begin{proposition}\label{prop:domination} Consider system \eqref{competition} and fix $j,k\in\{1,...,n\}$ with $j\ne k$. If  $f_j(s) - d_j > f_k(s) - d_k$ for all  $s\in(\bar{s},\bar{s}^+)$, then $x_k\to 0$ as $t\to\infty$. 
\end{proposition}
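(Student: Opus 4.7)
The plan is to track the logarithmic ratio of $x_j$ and $x_k$ and show that it diverges. Assuming the nontrivial case $x_j(0), x_k(0) > 0$ (the case $x_k(0)=0$ is immediate, and $x_j(0)=0$ renders the hypothesis operationally inert), Proposition \ref{prop:constants} keeps both populations strictly positive for all $t$, so the Lyapunov-like quantity
\[
V(t) := \ln x_j(t) - \ln x_k(t)
\]
is well defined. A direct computation between impulses gives $\dot V = g(s(t))$ with $g(s):= (f_j(s)-d_j)-(f_k(s)-d_k)$, which is strictly positive on $(\bar s,\bar s^+)$ by hypothesis. At each impulse both populations are multiplied by $1-r$, so $V$ is continuous across impulses; hence $V$ is non-decreasing along any trajectory and the goal reduces to showing $V(t)\to\infty$.

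To estimate the per-cycle increase, I would change the independent variable from $t$ to $s$ over a single cycle $[t_k^+,t_{k+1}^-]$. Here $\dot s = -\sum_i x_i f_i(s) < 0$ because at least one $x_i$ is positive by Proposition \ref{prop:constants}, so $s$ decreases strictly monotonically from $\bar s^+$ to $\bar s$. The substitution yields
\[
V(t_{k+1}^-) - V(t_k^+) \;=\; \int_{\bar s}^{\bar s^+}\!\frac{g(s)}{\sum_i x_i(s) f_i(s)}\,ds,
\]
where $x_i(s)$ denotes the value of $x_i$ at the instant during the cycle when the nutrient level equals $s$. A convenient feature of this change of variables is that the interval of integration is the same for every cycle.

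The main obstacle is bounding this increment below by a positive constant that is \emph{uniform} across cycles, since a priori the denominator could blow up and make the per-cycle gain shrink. Here I would exploit the boundedness in Proposition \ref{prop:constants}: there exists $M$ with $x_i(t)\le M$ for all $i,t$, and monotonicity of each $f_i$ gives $f_i(s) \le f_i(\bar s^+)$ throughout the interval of integration. Together these yield $\sum_i x_i(s) f_i(s) \le C := M\sum_i f_i(\bar s^+)$, so every cycle contributes at least
\[
\delta := \frac{1}{C}\int_{\bar s}^{\bar s^+} g(s)\,ds
\]
to $V$, and $\delta>0$ because $g$ is continuous and strictly positive on the open interval $(\bar s,\bar s^+)$. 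Since Proposition \ref{prop:constants} guarantees infinitely many cycles, $V(t)\to\infty$; boundedness of $x_j$ then forces $\ln x_k(t)\to-\infty$, giving $x_k(t)\to 0$.
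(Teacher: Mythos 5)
Your proof is correct, and its core idea --- tracking the ratio of the two competitors, which is invariant under the impulse map and evolves monotonically between impulses --- is the same as the paper's. Where you differ is in how the divergence is clinched. The paper writes the per-cycle update as multiplication by $\exp\bigl(\int_{t_\ell}^{t_{\ell+1}}((f_k(s)-d_k)-(f_j(s)-d_j))\,dt\bigr)$, observes that each such factor is strictly less than $1$, and concludes immediately that the ratio tends to $0$; strictly speaking, a product of factors each less than $1$ need not vanish unless the factors are bounded away from $1$ uniformly in the cycle index. You supply exactly that missing uniformity: by changing the integration variable from $t$ to $s$ (so every cycle integrates $g$ over the \emph{same} interval $(\bar s,\bar s^+)$) and bounding the denominator $\sum_i x_i f_i(s)$ above via the boundedness of solutions from Proposition~\ref{prop:constants} and the monotonicity of the $f_i$, you obtain a cycle-independent increment $\delta>0$ for $\ln(x_j/x_k)$. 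So your argument is not merely equivalent --- it is the more complete version of the paper's, at the modest cost of invoking an explicit a priori bound $M$ on the populations. One minor point: both you and the paper must (and do, implicitly or explicitly) restrict to $x_j(0)>0$, since with $x_j(0)=0$ and $x_k(0)>0$ the species $x_k$ converges to its own nontrivial periodic orbit by Proposition~\ref{growth1} and the stated conclusion fails; your parenthetical acknowledging this is appropriate.
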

Geometrically, this means that the two response functions must cross at some point in order for coexistence to be possible between these two species. 

We now restrict our attention to model \eqref{competition} in the case where $n=2$. By
Proposition \ref{growth1}, the $(s,x_1,0)$ subspace and $(s,0,x_2)$ subspace each contain a periodic orbit that is globally 
attracting
with respect to solutions with initial conditions in the interior of that subspace.
At the impulse points, these periodic orbits satisfy 
\begin{subequations}
	\begin{align}
(s(t_n^-),x_1(t_n^-),x_2(t_n^-)) &= \left(\bar s, \frac{\mu_1}{r},0\right)
		 \label{eq:po1a} \\
	(s(t_n^+),x_1(t_n^+),x_2(t_n^+)) &= \left(\bar{s}^+,\frac{(1-r)\mu_1}{
		r},0\right) \label{eq:po1b}
\end{align}
\end{subequations}
and
\begin{subequations}
\begin{align} 
(s(t_n^-),x_1(t_n^-),x_2(t_n^-)) &= \left(\bar s,0,\frac{\mu_2}{r}\right) 
	\label{eq:po2a}  \\
(s(t_n^+),x_1(t_n^+),x_2(t_n^+)) &=
\left(\bar{s}^+,0,\frac{(1-r) \mu_2}{ r}
	\right), \label{eq:po2b}
\end{align}
\end{subequations}
respectively.

We analyse the stability of these planar periodic orbits with 
respect to the interior of $\mathbb{R}_+^3$ using impulsive Floquet theory (see~\cite{BS1,BS2}). Each of these periodic orbits has three Floquet 
multipliers;
one of the multipliers equals one, and from
calculations in~\cite{SmWo2}, another multiplier is $1-r$, which is strictly less than one. We  denote the third multiplier for the orbit with $x_j(t)>0$
by
$\Lambda_{jk}$ for $j,k\in \{1,2\}$ with $j\neq k$.

\begin{theorem}\label{thm:persistence} Consider system \eqref{competition} with $n=2$. Assume
$\mu_{min}>0$ and that $|\Lambda_{jk}|>1$ for 
$j,k\in\{1,2\}$ with $j\neq k$. Then all solutions with initial
conditions that satisfy $$
s(0) = \bar s^+, \quad x_1(0) > 0, \quad x_2(0) > 0
$$
are persistent; i.e.,
$$
\liminf_{t\rightarrow\infty}x_1(t) > 0 \quad \text{and} \quad
\liminf_{t\rightarrow\infty}x_2(t) > 0.
$$
\end{theorem}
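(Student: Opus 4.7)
The plan is to reduce Theorem~1 to a uniform persistence statement for an associated stroboscopic map and then invoke a standard repeller/acyclicity argument. Because $ds/dt<0$ whenever $s>0$ and $x_1+x_2>0$, every trajectory starting on the cross-section $\Sigma=\{s=\bar s^+\}$ with positive total biomass crosses the decanting surface $\{s=\bar s\}$ transversally, triggering an impulse that returns $s$ to $\bar s^+$; Proposition~\ref{prop:constants} supplies both the infinite sequence of impulse times $\{t_k\}$ and the uniform boundedness needed for dissipativity. This defines a continuous stroboscopic map
\[
P : \R_+^2 \to \R_+^2, \qquad P\bigl(x_1(t_k^+),x_2(t_k^+)\bigr) = \bigl(x_1(t_{k+1}^+),x_2(t_{k+1}^+)\bigr),
\]
which admits a compact global attractor. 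Since $x_j\equiv 0$ is preserved by both the continuous flow and the impulse, each coordinate axis of $\R_+^2$ is invariant under $P$.

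On each axis the one-species model of~\cite{SmWo2} applies, so Proposition~\ref{growth1} implies that $P$ restricted to the open half-axis $\{x_j>0,\;x_k=0\}$ has a unique globally attracting fixed point $E_j$, namely $E_1=((1-r)\mu_1/r,0)$ and $E_2=(0,(1-r)\mu_2/r)$. Together with the origin these exhaust the boundary fixed points of $P$. No heteroclinic cycle can be formed on $\partial\R_+^2$ because the two axes meet only at $0$, and $0$ is repelling on each axis (orbits there are pushed monotonically toward $E_j$).

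At each $E_j$, the Jacobian $DP(E_j)$ is upper triangular in axis-aligned coordinates: the tangential eigenvalue is the single-species return multiplier, equal to $1-r<1$, and the transverse eigenvalue is, by impulsive Floquet theory, exactly $\Lambda_{jk}$. The hypothesis $|\Lambda_{jk}|>1$ therefore says that $E_1$ and $E_2$ are hyperbolic saddles whose unstable directions point into the open positive quadrant. With $P$ dissipative, the boundary invariant, its chain-recurrent set equal to the acyclic collection $\{0,E_1,E_2\}$, and each boundary fixed point a uniform repeller in the normal direction, a standard persistence theorem for discrete dissipative semiflows (a Butler--Freedman--Waltman-type result in the impulsive setting, cf.~\cite{BS1,BS2}) yields $\liminf_{k\to\infty}x_j(t_k^+)>0$ for $j=1,2$. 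Between successive impulses, $x_j(t)$ is bounded below by a continuous positive function of $x_j(t_k^+)$ on the compact attractor, so $\liminf_{t\to\infty}x_j(t)>0$ follows.

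The main obstacle I expect is the careful identification of the transverse multiplier of $DP(E_j)$ with $\Lambda_{jk}$ and the upgrade from linear instability to the uniform, neighborhood-based repulsion required by the persistence theorem. The first step requires differentiating the flow through the state-determined return time: one applies the implicit function theorem to $s(t)=\bar s$, using transversality of the flow to the decanting surface (guaranteed by $\mu_{min}>0$ and $ds/dt<0$), to get a smooth Poincaré map and then relates its normal eigenvalue to the variational equation along the single-species periodic orbit. The second step, handling the case of negative or complex multipliers with $|\Lambda_{jk}|>1$, is precisely why the hypothesis is phrased in absolute value and requires the repeller argument to be formulated for the iterated map $P$ rather than for the linearized flow alone.
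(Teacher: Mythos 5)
Your proposal follows essentially the same route as the paper: both reduce the problem to the discrete map obtained by sampling solutions at the impulse times, identify the boundary fixed points of that map with the two single-species periodic orbits, recognize $1-r$ as the tangential and $\Lambda_{jk}$ as the transverse eigenvalue there, and conclude persistence from a Butler--McGehee-type argument. The only real difference is packaging: the paper runs the Butler--McGehee contradiction by hand (via Theorem 3.1 of Freedman--So and an explicit analysis of which subsets of the axis $\{x_1=0\}$ can be invariant), whereas you invoke the standard acyclicity/uniform-persistence theorem for dissipative maps, which is a legitimate shortcut once its hypotheses are verified.

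The one point you should not gloss over is the origin. Your stroboscopic map $P$ is not defined at $(0,0)$ --- with zero biomass the nutrient never reaches $\bar s$ and no impulse occurs --- and it does not extend continuously there, since the return time diverges as $x_1+x_2\to 0$ and the limit of $P$ along a ray depends on the ray (whichever species dominates during the long transient takes over). So the statement that the chain-recurrent boundary set is the acyclic collection $\{0,E_1,E_2\}$, with $0$ a repelling fixed point, is not literally available. The paper's proof handles exactly this by restricting to the forward-invariant region $X$ cut out by the curve along which the unstable manifold of the equilibrium $(\bar s,0,0)$ meets the section $\{s=\bar s\}$; on $X$ the map is well defined, continuous (their Lemma \ref{ctsphi}), and maps $X$ into itself, and the origin is excluded, so the only boundary invariant sets are $P_1$ and $P_2$. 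You need an equivalent device (the region $X$, or a uniform positive lower bound on $x_1(t_k^-)+x_2(t_k^-)$ deducible from $\mu_{min}>0$ as in the proof of Proposition \ref{prop:constants}) before the acyclicity theorem applies; with that inserted, your argument is sound and matches the paper's.
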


\bigskip

\noindent
Theorem \ref{thm:persistence} gives conditions under which there 
is
coexistence of the two species, independent of initial conditions 
(provided
both species are present to begin with).  However, it says nothing about 
the
nature of that coexistence. In the special case with $d_1=d_2=0$, we can show that there is an attracting impulsive periodic orbit with one impulse per period. Numerical simulations in the case where $d_j \ne 0$ also indicate that coexistence is in this form. 

\subsection{Competition with $d_1=d_2=0$}

The species-specific death rates are often assumed to be negligible in applications~\cite{liu2014competitive}. This is a valid approximation when the cycle length is not too long, since bacteria in the fermentor will remain in their exponential growth phase for the duration of a cycle. When $d_j = 0$, all of the consumed nutrient is converted to biomass. Without any mass lost to cell death, the total amount of mass in the fermentor is conserved between impulses. As a result, the total mass present in the fermentor converges to a constant value as the number of impulses increases. 

\begin{lemma}\label{lem:conservation}
    Consider system \eqref{competition} with $d_j=0$ for all $j\in\{1,...,n\}$ and
	assume that  the initial conditions satisfy
    $$
    s(0)=\bar s^+, \ x_j(0)\geq 0, \  j\in\{1,\ldots,n\},\  \sum_{j=1}^nx_j(0)\neq 
    0.
    $$
    Then $s+\sum_{j=1}^n x_j \to s^\In$ as $t \to \infty$.
\end{lemma}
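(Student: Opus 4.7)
\textbf{Proof plan for Lemma \ref{lem:conservation}.} The plan is to exploit the conservation law that $d_j=0$ enforces between impulses, so that the total mass $M(t) \equiv s(t) + \sum_{j=1}^n x_j(t)$ evolves only at the discrete decanting times, where it obeys a simple contractive affine recursion with fixed point $s^{\In}$.

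First, I would verify that $M$ is constant on each continuous piece. For $t$ with $s(t)\ne \bar s$, direct differentiation gives
\begin{equation*}
\frac{dM}{dt} \;=\; -\sum_{j=1}^n x_j f_j(s) \;+\; \sum_{j=1}^n x_j\bigl(f_j(s)-d_j\bigr) \;=\; -\sum_{j=1}^n d_j x_j \;=\; 0,
\end{equation*}
since each $d_j=0$. Hence $M$ is piecewise constant, with jumps only at the impulse times $\{t_k\}$ supplied by Proposition \ref{prop:constants}.

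Next, I would compute the jump. At an impulse, $s(t_k^-)=\bar s$, and applying the reset rules yields
\begin{equation*}
M(t_k^+) \;=\; \bigl(r s^{\In} + (1-r)\bar s\bigr) + (1-r)\sum_{j=1}^n x_j(t_k^-) \;=\; r s^{\In} + (1-r)M(t_k^-).
\end{equation*}
Setting $M_k \equiv M(t_k^-)$ and combining with the between-impulse conservation $M(t_{k+1}^-)=M(t_k^+)$, I obtain the scalar linear recursion $M_{k+1} = r s^{\In} + (1-r) M_k$. Subtracting the fixed point gives $M_{k+1} - s^{\In} = (1-r)(M_k - s^{\In})$, so $M_k \to s^{\In}$ geometrically as $k\to\infty$ because $0<1-r<1$.

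Finally, I would pass from the discrete convergence of $M_k$ to continuous-time convergence of $M(t)$. For any $t\ge t_1$, there is a unique $k$ with $t\in(t_k,t_{k+1}]$, and on that interval $M(t)=M(t_k^+)=M_{k+1}$. Since Proposition \ref{prop:constants} guarantees $t_k\to\infty$, sending $t\to\infty$ forces $k\to\infty$, so $M(t)\to s^{\In}$. There is no real obstacle here; the only thing to be careful about is the bookkeeping of one-sided limits at the impulse times and the appeal to Proposition \ref{prop:constants} to ensure that the sequence of impulses is indeed infinite (which is where the hypothesis $\mu_{\min}>0$, inherited from the standing assumptions, is used implicitly).
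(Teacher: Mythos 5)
Your proposal is correct and follows essentially the same route as the paper: conservation of $s+\sum_j x_j$ between impulses, the affine recursion $M_{k+1}=rs^{\In}+(1-r)M_k$ at decanting times, and geometric convergence to the fixed point $s^{\In}$. The only cosmetic difference is that you solve the recursion by subtracting the fixed point while the paper writes out the explicit general solution; both yield the same conclusion.
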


As a consequence of 
Lemma \ref{lem:conservation}, we only need to consider solutions of system \eqref{competition} restricted to the set $\{(s,x_1,...,x_n)\in\mathbb{R}^{1+n}_+~|~s+\sum_{j=1}^n x_j = s^\In\}$. Thus, for $n=2$ we consider the reduced system 
\begin{subequations}\label{2comp}
    \begin{align}
        \begin{drcases}
        \frac{dx_1}{dt}  =  x_1 f_1(s^\In-x_1-x_2)\\
        \frac{dx_2}{dt}  =
        x_2 f_2(s^\In - x_1 -x_2)
        \end{drcases} && x_1(t)+x_2(t) &\ne s^\In - \bar{s}, \label{eq:2comp_continuous}\\
        \begin{drcases}
            x_1(t^+) = (1-r)x_1(t^-) \\
            x_2(t^+) = (1-r)x_2(t^-)\qquad\ \,
        \end{drcases}
        && x_1(t^-)+x_2(t^-) &= s^\In - \bar{s},\label{eq:2comp_discrete}
    \end{align}
\end{subequations}
with $(1-r)(s^\In-\bar{s}) = s^\In - \bar{s}^+ \le x_1+x_2\le s^\In-\bar{s}$. If $t_k$ is the $k$th moment of impulse, then we can write $x_2(t_k^+) = s^\In-\bar{s}^+-x_1(t_k^+)$ and $x_2(t_k^-) =s^\In-\bar{s}-x_1(t_k^-)$ by equation \eqref{eq:2comp_discrete}.

\begin{theorem}\label{thm:coexistence}
Consider system \eqref{2comp} with initial conditions satisfying $s^\In-\bar{s}^+\le x_1(0)+x_2(0) < s^\In-\bar{s}  $. Exactly one of the following holds:
\begin{enumerate}
    \item There is at least one periodic orbit with both species present and one impulse per period.
    \item All solutions converge to the periodic orbit
	    \eqref{eq:po1a}--\eqref{eq:po1b} with $x_1(t)>0$ and $x_2(t)=0$.
    \item All solutions converge to the periodic orbit
	    \eqref{eq:po2a}--\eqref{eq:po2b} with $x_1(t)=0$ and $x_2(t)>0$. 
\end{enumerate}
\end{theorem}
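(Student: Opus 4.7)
The plan is to reduce the dynamics to a one-dimensional Poincar\'e impulse map on the post-impulse section and apply an elementary fixed-point/monotone-convergence trichotomy. By Lemma~\ref{lem:conservation}, for any solution of \eqref{competition} with $d_1=d_2=0$, the quantity $s+x_1+x_2$ is conserved between impulses and is mapped by $C\mapsto rs^\In+(1-r)C$ at each impulse, so $s^\In$ is a globally attracting fixed point of that discrete dynamic. Restricting as in \eqref{2comp} to the invariant manifold $s+x_1+x_2=s^\In$ (which every orbit in the theorem reaches after at most one impulse), the post-impulse section
\[
\Sigma = \{(x_1,x_2)\in\mathbb{R}_+^2 : x_1+x_2 = L\}, \qquad L := s^\In-\bar{s}^+ = (1-r)(s^\In-\bar{s}),
\]
is invariant under the impulsive flow. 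I would parameterize $\Sigma$ by $u=x_1\in[0,L]$ (so $x_2=L-u$) and analyze the induced one-dimensional Poincar\'e map $\phi:[0,L]\to[0,L]$ defined by $\phi(u)=x_1(t_{k+1}^+)$ when $(x_1(t_k^+),x_2(t_k^+))=(u,L-u)$.

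Next I would show that $\phi$ is well-defined and continuous. Between impulses, the total mass $M=x_1+x_2$ satisfies $\dot M = x_1 f_1(s)+x_2 f_2(s)\ge 0$, with strict inequality whenever $s>0$ and at least one species is present; since $M(t_k^+)=L<s^\In-\bar{s}$ and solutions are bounded, $M(t)$ strictly increases to $s^\In-\bar{s}$ in finite time, so the orbit hits the impulse surface. Continuity of $\phi$ then follows from continuous dependence of \eqref{eq:2comp_continuous} on initial data together with transversality of the impulse surface. The boundary values are immediate: if $u=0$, then $x_1\equiv 0$ throughout the cycle, so $\phi(0)=0$; symmetrically, $\phi(L)=L$. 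Moreover $0<\phi(u)<L$ for every $u\in(0,L)$, because $\dot x_j=x_j f_j(s)\ge 0$ preserves positivity of each species.

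Now I would apply the trichotomy to $g(u):=\phi(u)-u$, which is continuous on $[0,L]$ and vanishes at both endpoints. Either $g$ has a zero $u^\ast\in(0,L)$, in which case the orbit through $(u^\ast,L-u^\ast)$ on $\Sigma$ is a periodic solution with both species present and exactly one impulse per period (case (1)); or $g$ is nonzero throughout $(0,L)$, and by the intermediate value theorem has constant sign there. Suppose $g>0$ on $(0,L)$. For any $u_0\in(0,L)$, the iterates $u_k=\phi^k(u_0)$ form a strictly increasing sequence in $(0,L]$, hence converge to some $u^\ast\in(u_0,L]$; continuity of $\phi$ together with absence of interior fixed points forces $u^\ast=L$. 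Thus the post-impulse states converge to $(L,0)$, which, since $\mu_1=r(s^\In-\bar{s})$ when $d_1=0$, is precisely the post-impulse state \eqref{eq:po1b} of the species-1 periodic orbit; continuous dependence on initial data lifts this to convergence of the full orbit on each cycle, giving case (2). The symmetric argument with $g<0$ on $(0,L)$ yields case (3). Mutual exclusion is automatic: an interior periodic orbit does not approach either boundary orbit, so (1) excludes (2) and (3), and (2) and (3) are obviously incompatible.

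The main technical obstacle is verifying that $\phi$ is well-defined and continuous at the endpoints $u=0$ and $u=L$, where one species vanishes and the dynamics degenerate to the single-species case of Proposition~\ref{growth1}; this requires that the time to the next impulse remains bounded as $u\to 0^+$ or $u\to L^-$, which follows by comparison with the single-species orbit, whose impulse time is finite and depends continuously on initial data. A secondary subtlety is that the initial condition need not lie on $\Sigma$, but since $x_1(0)+x_2(0)<s^\In-\bar{s}$ by hypothesis, the first impulse places the solution on $\Sigma$, after which the map $\phi$ governs the long-term dynamics.
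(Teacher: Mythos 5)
Your proposal is correct and follows essentially the same route as the paper: both reduce \eqref{2comp} via Lemma~\ref{lem:conservation} to a one-dimensional Poincar\'e map on the post-impulse segment parameterized by $x_1\in[0,s^\In-\bar{s}^+]$, identify its fixed points with one-impulse-per-period orbits (with the endpoints giving the single-species orbits), and obtain the trichotomy from whether an interior fixed point exists. The only divergence is the convergence step in cases (2) and (3): you use the intermediate value theorem plus a bounded monotone sequence of iterates, which is more elementary and does not require the map itself to be order-preserving, whereas the paper observes that the map \emph{is} monotone (planar trajectories cannot cross, so the section-to-section map preserves order) and cites the general convergence theorem for one-dimensional monotone dynamical systems. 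The paper's stronger route pays off later: it yields convergence of \emph{every} orbit to some fixed point even in case (1), which is the fact actually used to deduce Corollary~\ref{cor:coexistence_d0}; your argument establishes the theorem as stated but would need the monotonicity observation added to recover that corollary.
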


Theorem \ref{thm:coexistence} completely characterizes the long-term dynamics of system \eqref{2comp}. Coupling this with Lemma \ref{lem:conservation}, we have a complete understanding of the possible dynamics of system \eqref{competition} when $n=2$ and $d_1=d_2=0$. Thus, if the conditions for Theorem \ref{thm:persistence} are met, then every solution with positive initial conditions must converge to a positive periodic solution with one impulse per period. This discussion suffices as proof of the following corollary.

\begin{corollary}\label{cor:coexistence_d0}
    Consider system \eqref{competition} with $n=2$ and $d_1=d_2=0$. If $\vert\Lambda_{jk}\vert>1$ for $j,k\in\{1,2\}$ with $j\ne k$, then all solutions with initial conditions that satisfy 
    $$ s(0) = \bar{s}^+,\quad x_1(0) >0,\quad x_2(0)>0$$ converge to a positive periodic orbit with one impulse per period. 
\end{corollary}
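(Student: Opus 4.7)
The plan is to chain the three preceding results of this section: Lemma~\ref{lem:conservation} reduces the dynamics asymptotically to two dimensions, Theorem~\ref{thm:persistence} rules out collapse onto the boundary periodic orbits, and the trichotomy of Theorem~\ref{thm:coexistence} then pins down the long-term behaviour as a positive one-impulse periodic orbit.

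Carrying this out in order: since $d_1=d_2=0$, Lemma~\ref{lem:conservation} gives $s(t)+x_1(t)+x_2(t)\to s^\In$, so the omega-limit set of every orbit of \eqref{competition} with the stated initial data lies on the invariant surface $s+x_1+x_2=s^\In$, on which the dynamics coincide with those of the reduced impulsive system \eqref{2comp}. Next, the standing assumption $\mu_{min}>0$ from Section~\ref{twospecies}, combined with the hypothesis $|\Lambda_{jk}|>1$, activates Theorem~\ref{thm:persistence} and yields $\liminf_{t\to\infty}x_i(t)>0$ for $i=1,2$. Persistence immediately excludes alternatives (2) and (3) of Theorem~\ref{thm:coexistence}, since each would force one of $x_1,x_2$ to zero. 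The exclusive trichotomy therefore forces alternative (1): \eqref{2comp} admits a positive periodic orbit with exactly one impulse per period.

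The main obstacle I expect is the last mile, namely upgrading alternative (1) from the bare existence of a positive periodic orbit to convergence of every interior orbit to such an orbit. I would handle this by restricting attention to the one-dimensional post-impulse section $\Sigma=\{x_1+x_2=s^\In-\bar{s}^+,\ x_1,x_2\ge 0\}$ and the associated impulsive Poincar\'e map $P\colon\Sigma\to\Sigma$. Persistence keeps forward $P$-orbits of interior points bounded uniformly away from the two endpoints of $\Sigma$, and the fact that alternatives (2) and (3) fail rules out attraction to fixed points on $\partial\Sigma$. Because \eqref{eq:2comp_continuous} has Kolmogorov form and both $x_1,x_2$ grow strictly monotonically between impulses, $P$ is a continuous order-preserving self-map of an interval, so every interior $P$-orbit must converge to an interior fixed point of $P$, i.e., to a positive one-impulse periodic orbit of \eqref{2comp}. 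Lifting this conclusion back to the full system via Lemma~\ref{lem:conservation} gives convergence of every solution of \eqref{competition} with $x_1(0),x_2(0)>0$ to a positive one-impulse periodic orbit, as claimed.
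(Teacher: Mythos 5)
Your proposal is correct and follows essentially the same route as the paper, which likewise chains Lemma~\ref{lem:conservation}, Theorem~\ref{thm:persistence} and the trichotomy of Theorem~\ref{thm:coexistence}. The ``last mile'' you flag is already supplied by the paper's proof of Theorem~\ref{thm:coexistence}: the map $G$ defined there on the post-impulse section is exactly your Poincar\'e map $P$, and the cited monotone-dynamical-systems result gives convergence of every orbit to a fixed point, so persistence forces convergence to an interior fixed point just as you argue.
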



{\bf Example.} Consider system \eqref{competition} with $n=2$,  $d_1=d_2=0$, and assume the
response functions have Monod form
\begin{eqnarray*}
f_j(s)=\frac{m_j s}{K_j+s}.
\end{eqnarray*}
It can be shown that the Floquet multipliers for the
periodic orbit on the face $x_k\equiv 0$ are 1, $1-r$ and
\begin{eqnarray}
\Lambda_{jk} = \left(\frac{1}{1-r}\right)^{\frac{m_k(K_j+s^\In)}{
m_j(K_k+s^\In)}-1} \left(\frac{K_k+\bar s^+}{
K_k+\bar s}\right)^{\frac{m_k(K_j-K_k)}{ m_j(K_k+s^\In)}}, \qquad j\neq 
k.
\label{muxjk} \end{eqnarray}  
See Appendix~\ref{app:FM} for the calculations of this multiplier. By Corollary
\ref{cor:coexistence_d0}, if $\Lambda_{12}>1$ and $\Lambda_{21}>1$, then solutions converge to a positive periodic solution.  

In Figure \ref{imp01}, we fix the parameters inherent to
the system as well as $m_2$ and $K_2$. This is equivalent to having 
species $x_2$ already in the fermentor. We then vary $m_1$ and $K_1$ to
simulate different possible choices of species $x_1$. Figure \ref{imp01}
shows the various states in $m_1$-$K_1$ space. The other constants are $m_2=1$, $K_2=1$, $s^\In=20$, $\bar s=0.1$ and $r=\frac{1}{2}$.
Two species can coexist in the central shaded region (C). The point $(1,1)$ corresponds to the case when the two uptake functions 
are
identical and both multipliers are equal to one. The bounding curves of 
the green
shaded region (C) are tangent to one another at this point~\cite{Smith?thesis}.

\begin{figure}[ht!]
\includegraphics[width = 0.8\textwidth]{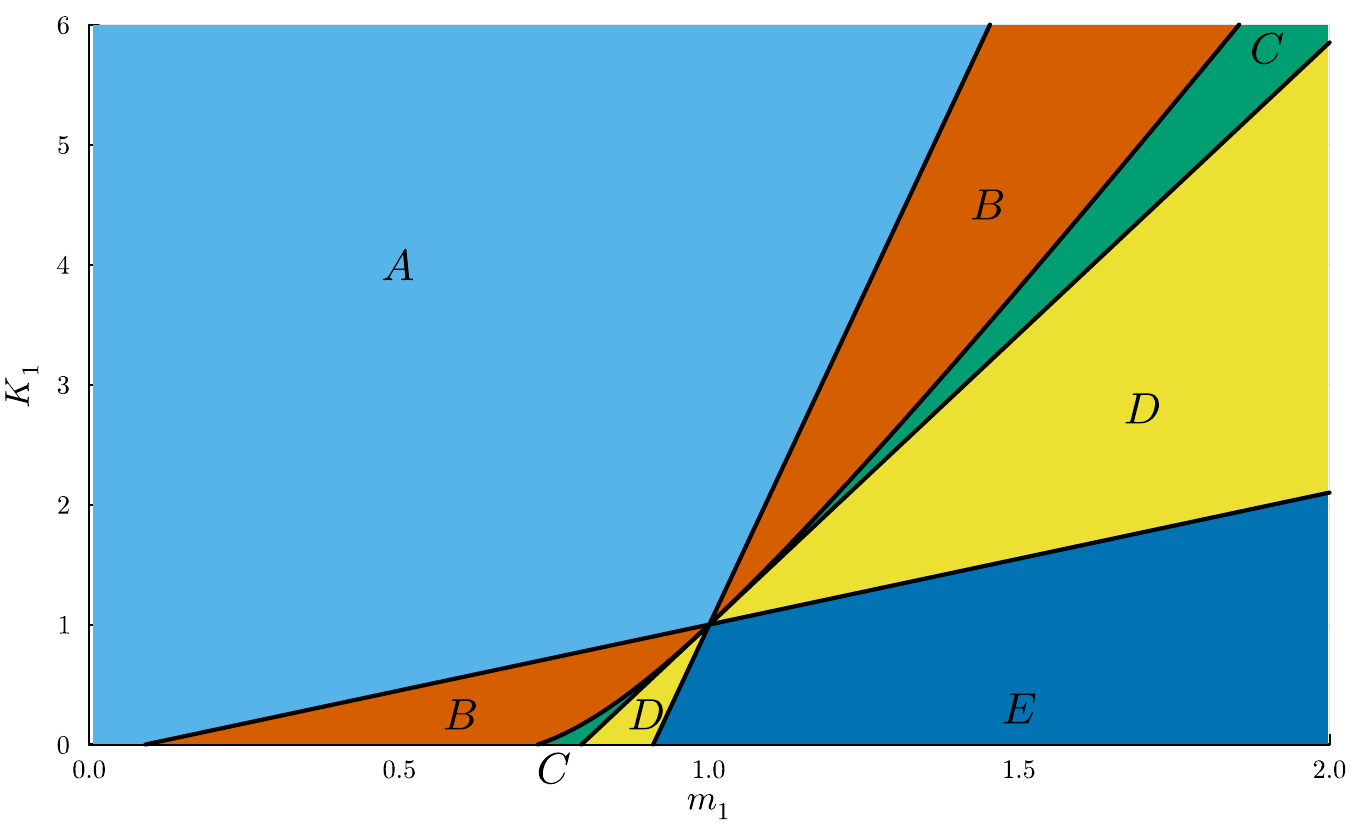}
\caption[Regions in parameter space suggesting coexistence
	]{Outcomes of two species, $x_1$ and $x_2$, competing in the
	fermentor. Parameters $s^\In$, $\bar s$, $m_2$ and $K_2$ were fixed and parameters
	$m_1$ and $K_1$ were varied.
A. $f_2(s)>f_1(s)$ for $\bar s < s < \bar s^+$, so species $x_2$ wins the
competition. B. The two uptake functions cross, but $x_2$ still wins the
competition. C. In the green region in the centre, both Floquet 
multipliers
	are greater than 1, so, as predicted by Corollary~\ref{cor:coexistence_d0}, both
	species coexist. D. The
uptake functions cross, but $x_1$ wins the competition. E. $f_1(s)>f_2(s)$
for $\bar s < s < \bar s^+$, so $x_1$ wins the competition. 
}\label{imp01}
\end{figure}

\section{Three-species competition and simulations} 
\label{threespecies} 

The possibility of survival for two competing species in the self-cycling 
fermentation process raises the question of whether more species can 
coexist on a single nonreproducing limiting nutrient. The 
results in the previous sections cannot easily be applied to competition of $n\ge 3$
species. The impulsive Floquet multipliers can only be calculated
with relative ease for systems that can be reduced to two-dimensional
systems. However, numerical simulations were run to determine whether 
three
species could coexist. 

For the system with three competitors, let $\Lambda_{jk}$ denote the 
nontrivial Floquet multiplier for the periodic orbit on the boundary 
$x_k=0$, for the system where species $j$ and $k$ are present, but the 
third species is absent. Then $\Lambda_{kj}$ is the nontrivial Floquet 
multiplier for the periodic orbit on the boundary $x_j=0$, where the third 
species is absent. We can then apply Theorem \ref{thm:persistence} to each of 
the three cases where two species are present and the third species is 
absent.

System \eqref{competition} with $n=3$ was simulated using the DifferentialEquations.jl toolbox in Julia~\cite{rackauckas2017differentialequations} with $s^\In=20$, $\bar s=0.1$, $r=\frac{1}{2}$ and species-specific parameters listed in Table \ref{tab:parameters_3species}.

\begin{table}[ht]
    \centering
    \begin{tabular}{|c|c|c|c|}
    \hline        $j$ &$m_j$ & $K_j$ & $d_j$  \\
    \hline
        1 & 2.142653 & 6.33 & 0.0 \\
        2 & 1.0 & 1.0 & 0.0 \\
        3 & 7.0 & 32.5 & 0.0\\
        \hline
    \end{tabular}
    \caption{Species-specific parameters used in Figure \ref{comp4graph}. The parameters for Species $x_1$ were chosen from Region D in Figure \ref{imp01}, while the parameters for Species $x_3$ were chosen from Region C.}
    \label{tab:parameters_3species}
\end{table}

Using these data, if $x_1$ is absent, we have 
$$ 
\Lambda_{23} \ = \ 1.137600, \qquad \Lambda_{32} \ = \ 1.049998. 
$$ 
Thus, in the absence of $x_1$, we see that $x_2$ and $x_3$ persist by Theorem 
\ref{thm:persistence}. If $x_2$ is absent, we have 
$$ 
\Lambda_{13} \ = \ 1.008808, \qquad \Lambda_{31} \ = \ 1.014487. 
$$ 
Thus, in the absence of $x_2$, we see that $x_1$ and $x_3$ persist by Theorem 
\ref{thm:persistence}. If $x_3$ is absent, we have 
$$ 
\Lambda_{12} \ = \ 0.985852, \qquad \Lambda_{21} \ = \ 1.089587. 
$$ 
Thus, in the absence of $x_3$, we find that $x_1$ and $x_2$ cannot coexist. It follows 
that
this system is an example of competitor-mediated  coexistence, since $x_2$
cannot survive in the presence of $x_1$ unless $x_3$ is also present. 


\begin{figure}[!tp] 
\includegraphics[]{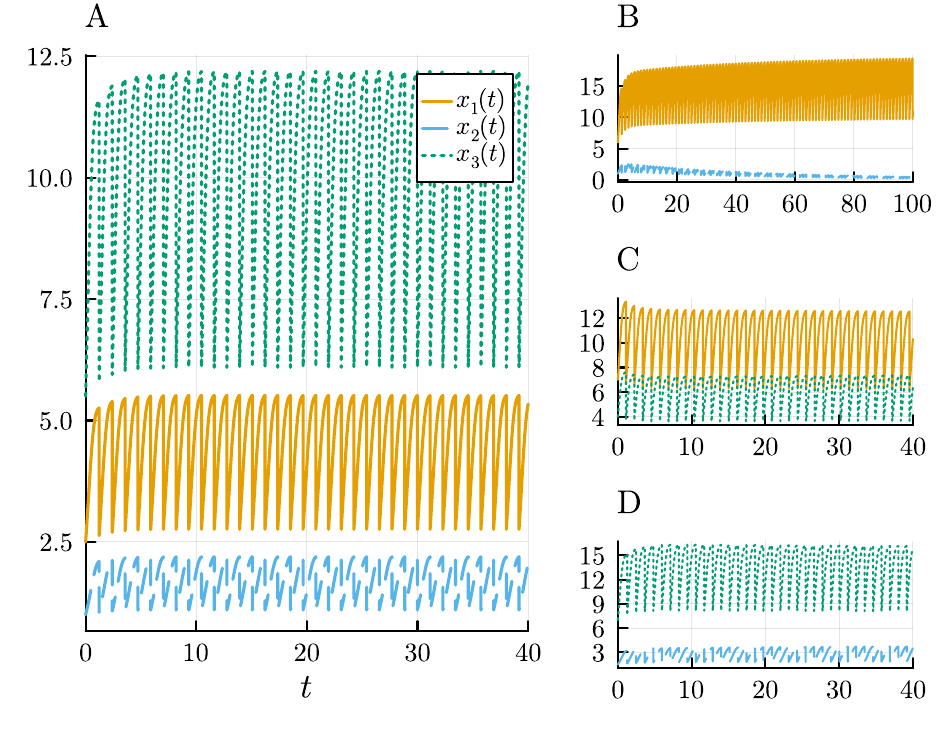}
\caption[Coexistence of three microorganisms in the SCF process]{ 
A. Three microorganisms competing for a single resource in the 
self-cycling fermentor. 
	The species appear to be persisting over time,
suggesting coexistence. Note in particular that $x_2$ is the weakest 
competitor. B. Species $x_2$ cannot survive in the presence of $x_1$ if 
$x_3$ is absent, demonstrating competitor-mediated coexistence. C. Species 
$x_1$ and $x_3$ coexist. D. Species $x_2$ and $x_3$ coexist. 
}\label{comp4graph} 
\end{figure}

\section{Discussion}\label{discussion}

Coexistence of more than one species is possible in the self-cycling
fermentation process. The model with only two species is simple enough that we are
able to prove when two species are able to survive in the same environment using impulsive Floquet theory. However, we are not able to determine the exact form of that coexistence in a general setting. In the special case where the decay rates of both species are negligible, we are able to reduce the dynamics to those of a one-dimensional monotone dynamical system. The general theory of monotone dynamical systems allows us to conclude that coexistence is in the form of a periodic solution with one impulse per period. 

In the analogous model of the chemostat, where the nutrient is pumped in
continuously at a constant rate, coexistence of two species
competing for a single nonreproducing nutrient is not possible (aside from a few knife-edge cases involving the equality of certain parameters)~\cite{SW, wolkowicz1992global}.
The results here are similar to competition in the chemostat with periodic dilution rate~\cite{wolkowicz1998n}. There, multiple species are able to coexist provided that each species is the best competitor for a significant portion of the dilution cycle. A similar condition was required for the coexistence of two species in a model of serial transfer cultures~\cite{smith2011bacterial}. Here, we have extended those results to competition in the self-cycling fermentation process. If one species is the best competitor at every nutrient concentration, then that species will out-compete the others. However, while being a better competitor at some nutrient levels is necessary for survival, it is not sufficient.

We were unable to find analogous theoretical results to determine the outcome of three-species competition. However, numerical simulations show that coexistence between three species is possible. Interestingly, two of the species in our example are unable to coexist without the third species present. This phenomenon of competitor-mediated coexistence has also been observed in other resource-competition models~\cite{arino2003considerations}. 

In applications where the system is best served by a particular class of microorganism, we give conditions for the exclusion of other competing species or strains. Our results suggest that it may be possible to tune reactor parameters, such as the decanting fraction $r$ and decanting criterion $\bar{s}$, in order to exclude unwanted competitors. This could be an important strategy used to maintain desired populations in wastewater treatment systems~\cite{lopez2009modeling,shen2016enhanced,trinh2021recent}.

For applications in which the goal is to maximize the throughput of the system --- as would be the case in the production of polyhydroxyalkanoates --- having multiple species present may provide a more robust system. The coexistence of multiple species offers a buffer in the event that one species abruptly dies off. It is unclear whether the production efficiency would be increased or decreased by the presence of more species, although experimental evidence suggests that an increase is possible~\cite{kourmentza2015polyhydroxyalkanoates}. The fact that three species can co-exist in the self-cycling fermentor suggests the possibility of multiple species co-existing simultaneously under appropriate conditions. This has implications for more efficient treatment of wastewater and greater yield, with a buffer against unexpected species extinction.



\bibliographystyle{plain}
\bibliography{Bibliography}

\appendix
\section{Proofs}\label{proofs}

\begin{proof}[Proof of Proposition \ref{prop:constants}]\mbox{}
That $s(t)$ remains nonnegative is obvious. The faces of $\mathbb{R}^{1+n}_+$ with $x_j = 0$ are invariant under \eqref{competition_continuous}, and therefore, by the uniqueness of solutions to ODEs, the interior of $\mathbb{R}_+^{n+1}$ is invariant. Since impulses take $x_j(t)$ to $(1-r)x_j(t)$, if $x_j(0)>0$, then $x_j(t)>0$ for all $t$.

Next we show that solutions with the given initial conditions
reach $s(t)=\bar s$ in finite time. Suppose not. Then there exists $s^*>\bar{s}$ such that as  $t\to \infty$, $s(t)\to s^*$ and $x_j(t)\to 0$ for each $j\in\{1,...,n\}$. If
$ u(t) = \sum_{j=1}^n x_j(t)$, then $$\frac{du}{ds} = \frac{\sum_{j=1}^n(f_j(s)-d_j)x_j}{-\sum_{j=1}^n f_j(s)x_j}.$$
Integrating with respect to $s$ gives
$$u(s)-u(\bar{s}^+) =  \int_s^{\bar{s}^+}
	\frac{\sum_{j=1}^n(f_j(\sigma)-d_j)x_j(\sigma)}{\sum_{j=1}^nf_j(\sigma)x_j(\sigma)}d\sigma.$$
	If $s^* \ge  \lambda_{max}$, then the integrand is positive for all $s^* < s
	< \bar{s}^+$. This implies that  
$$ 0 > -u(\bar{s}^+)  =  \int_{s^*}^{\bar{s}^+}
	\frac{\sum_{j=1}^n(f_j(\sigma)-d_j)x_j(\sigma)}{\sum_{j=1}^nf_j(\sigma)x_j(\sigma)}d\sigma
	>0,$$ yielding  a contradiction. If $\overline{s} \le s^* < \lambda_{max},$ then \begin{align*}
    -u(\bar{s}^+) &= \int_{s^*}^{\bar{s}^+}\frac{\sum_{j=1}^n(f_j(\sigma)-d_j)x_j(\sigma)}{\sum_{j=1}^nf_j(\sigma)x_j(\sigma)}d\sigma\\\
    &\ge \int_{s^*}^{\lambda_{max}}\frac{\min_j(f_j(\sigma)-d_j)}{\min_j(f_j(\sigma))}d\sigma+\int_{\lambda_{max}}^{\bar{s}^+}\frac{\min_j(f_j(\sigma)-d_j)}{\max_j(f_j(\sigma))}d\sigma\\
    &\ge \mu_{min}, 
\end{align*} 
where the last inequality follows since the integrand of the first integral is negative on the domain of integration. Therefore $\bar{s}$ is reached in finite time and an impulse occurs.

The  solution  is then reset so that  
$s=(1-r)\bar s+rs^\In>\bar s$ 
(since $s^\In>\bar s$), and the sum of the $x_j$'s remain positive. Therefore,
the original  assumptions on the initial conditions are once again 
satisfied.
Hence, solutions cycle indefinitely.
\end{proof}

\begin{lemma}\label{ctsphi} The function $\varphi=\varphi(t,\tau,\xi)$ that
solves \eqref{competition_continuous} for
$n=2$ with initial
condition $\varphi(\tau,\tau,\xi)=\xi$ is continuous in $(t,\tau,\xi)$. 
\end{lemma}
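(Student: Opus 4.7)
The plan is to appeal to the classical theorem on continuous dependence of ODE solutions on initial conditions and initial times. With $n=2$, the vector field driving \eqref{competition_continuous} is
$F(s,x_1,x_2)=\bigl(-x_1 f_1(s)-x_2 f_2(s),\,x_1(f_1(s)-d_1),\,x_2(f_2(s)-d_2)\bigr)$,
which is $C^1$ on $\R^3$ since the $f_j$ are continuously differentiable by assumption. In particular, $F$ is locally Lipschitz, and by standard ODE theory, solutions $\varphi(\cdot,\tau,\xi)$ of $\dot\varphi = F(\varphi)$ with $\varphi(\tau,\tau,\xi)=\xi$ exist, are unique, and depend smoothly on their data on any subinterval on which they remain in a compact set.

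First I would establish continuity in $\xi$ for fixed $\tau$. By Proposition \ref{prop:constants}, any solution with nonnegative initial data remains in a bounded positively invariant set $K \subset \R_+^3$ between consecutive impulses, so $F$ admits a Lipschitz constant $L=L(K)$ on $K$. Grönwall's inequality applied to the difference of two solutions immediately yields $|\varphi(t,\tau,\xi_1)-\varphi(t,\tau,\xi_2)| \le e^{L|t-\tau|}|\xi_1-\xi_2|$, producing continuity in $\xi$ that is uniform for $t$ in any compact subinterval on which both solutions stay in $K$.

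Continuity in $t$ is immediate from $\varphi(\cdot,\tau,\xi)\in C^1$. Continuity in $\tau$ can then be obtained from the semigroup identity $\varphi(t,\tau,\xi)=\varphi\bigl(t,\tau+h,\varphi(\tau+h,\tau,\xi)\bigr)$ combined with the trivial bound $|\varphi(\tau+h,\tau,\xi)-\xi| \le \|F\|_{L^\infty(K)}|h|$ and the Lipschitz estimate from the previous paragraph. A triangle-inequality argument then assembles these three separate continuity statements into joint continuity in $(t,\tau,\xi)$.

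The only technical subtlety, and what I would regard as the main (minor) obstacle, is ensuring that perturbations of $(\tau,\xi)$ keep the trajectory inside a common compact set on the time interval of interest, so that the Lipschitz constant $L$ applies uniformly and the Grönwall estimate can be invoked. This is guaranteed by the invariance of $\R_+^3$ under \eqref{competition_continuous} together with the a priori bound furnished by Proposition \ref{prop:constants}. Consequently, no new ideas beyond classical continuous-dependence theory for ODEs are required.
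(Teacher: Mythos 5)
Your argument is correct and follows essentially the same route as the paper: both reduce the claim to the observation that the vector field is $C^1$ (hence locally Lipschitz) because the $f_j$ are continuously differentiable, and then invoke classical continuous dependence on initial data. The paper simply verifies continuity of $F$ and its Jacobian $F_w$ and cites a standard theorem, whereas you unpack the Gr\"onwall/semigroup details of that theorem's proof; this is more work than needed but not a different approach.
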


\begin{proof} We have
\begin{eqnarray*}
\left(\begin{array}{c}s'\\x_1'\\x_2'\end{array}\right) =
\left(\begin{array}{c}-f_1(s)x_1-f_2(s)x_2\\x_1(f_1(s)-d_1)\\x_2(f_2(s)-d_2)\end{array}\right) \
=
\ F(t,w)
\end{eqnarray*}
where $w=(s,x_1,x_2)$. Then
\begin{eqnarray*}
F_w = \left[\begin{array}{ccc}
-f_1'(s)x_1-f_2'(s)x_2&-f_1(s)&-f_2(s)\\ x_1f_1'(s)&f_1(s)-d_1&0\\ 
x_2f_2'(s)&0&f_2(s)-d_2
\end{array}\right]
\end{eqnarray*}
Each function $f_i$, $j=1,2$ is continuously differentiable, so
$F$ and $F_w$ are continuous.  Hence $\phi(t,\tau,\xi)$ is continuous
in $(t,\tau,\xi)$ by Theorem 7.1 in~\cite{MM}. \end{proof}

\begin{proof}[Proof of Theorem \ref{thm:persistence}] Consider any initial point 
$(\bar s^+,x_1(0),x_2(0))$ where $x_1(0) > 0$ and $x_2(0) > 0$.
By Proposition \ref{prop:constants}, there exists a first time $t_1$ such that
solutions of \eqref{competition} satisfy
$$
s(t_1^-) \ = \ \bar s, \quad x_1(t_1^-) \ > \ 0, \quad x_2(t_1^-) \ > \ 0.
$$
Then if $t_n$ denotes the time of the $n$th impulse point, we have, for
$t_{n-1}<t < t_n$ ($n> 1$),
\begin{align*}
0< \bar s < s(t) <\bar{s}^+
\end{align*} and,
for
each $j\in\{1,2\}$, either
$$0<(1-r)x_j(t_{n-1}^-)< x_j(t)\quad \text{or}\quad 0<x_j(t_n^-)<x_j(t).$$

Therefore, it suffices to consider the sequence
$\{(u_n,v_n)\}_{n=1}^{\infty}$ where $u_n=x_1(t_n^-)$ and
$v_n=x_2(t_n^-)$  and to show that
$$
\liminf_{n\rightarrow\infty}u_n \ > \ 0 \quad \mbox{and} \quad
\liminf_{n\rightarrow\infty}v_n \ > \ 0.
$$

The equilibrium point $(\bar s,0,0)$ of system \eqref{competition} is unstable
with a one-dimensional centre manifold along the $s$-axis and a two-dimensional unstable manifold that intersects the plane 
\begin{eqnarray*}
S_{\bar s}= \left\{(s,x_1,x_2):s=\bar s, x_1\geq 0, x_2\geq 0\right\}
\end{eqnarray*}
along a smooth curve, say $g(x_1,x_2)=0$. This curve in
$\R^2_+$ connecting the boundary points $(\hat x_1,0)$ and $(0,\hat x_2)$, 
where
$\hat x_1>0$ and $\hat x_2>0$, divides the plane $S_{\bar s}$ into a
bounded region and an unbounded region. Without loss of generality, assume
$$
X = \{(x_1,x_2): x_1\geq 0, x_2\geq 0, g(x_1,x_2)\geq 0 \}
$$
denotes the unbounded region.

Define the map $f:X\rightarrow X$ in the following way:
$$
f(u_n,v_n) = (u_{n+1},v_{n+1}),
$$
where we set $s(0)=\bar s^+, \ x_1(0)=u_n^+,
\ x_2(0)=v_n^+$ and then
determine $u_{n+1}=x_1(t_1^-)$ and $v_{n+1}=x_2(t_1^-)$ from system
\eqref{competition}. 

Note that for any initial condition of the form $s(0)=\bar s^+$, 
$x_1(0)>0$
and $x_2(0)>0$, we have $s(t_1^-)=\bar s$, $x_1(t_1^-)>0$ and $x_2(t_1^-)>0$, so
$s(t_1^+)=\bar s^+$, $x_1(t_1^+)>0$ and $x_2(t_1^+)>0$ and then 
$s(t_2^-)=\bar
s$ and $g(x_1(t_2^-),x_2(t_2^-))>0$. It follows that $f(X)\subset X$.

Next we show that $f$ is continuous on $X$ by showing that $f$ is a
composition $p\circ q:X\rightarrow X$ of two continuous functions,
$$
q:X\rightarrow X \qquad \mbox{and} \qquad p:X\rightarrow X.
$$

Define
\begin{align*}
q(x_1,x_2) &= \left((1-r)x_1,(1-r)x_2\right)
\end{align*}
and
\begin{align*}
p(x_1,x_2) &= (u(x_1,x_2),v(x_1,x_2)),
\end{align*}
where $u(x_1,x_2)=x_1(\bar t)$ and $v(x_1,x_2)=x_2(\bar t)$ such that
$(s(t),x_1(t),x_2(t))$,
$0\leq t\leq \bar t$ is the solution of the associated ODE \eqref{competition_continuous} with initial conditions
$$
s(0) = \bar{s}^+, \quad x_1(0) =  x_1, \quad
x_2(0) 
=
x_2 
$$
and $\bar s <s(t)<(1-r)\bar s+rs^\In$ for $0<t<\bar t$ and
$s(\bar t)=\bar s$.

It is clear that $q$ is continuous. That $p$ is continuous follows from
continuous dependence on initial data for ordinary differential equations
(see Lemma \ref{ctsphi}).
   
The map $f$ has two equilibrium points, $P_1=(\bar x_1,0)$ and 
$P_2=(0,\bar
x_2)$, where $\bar x_j>0$, $j=1,2$. $P_1$ and $P_2$ represent single 
species 
survival equilibria of the map  and correspond to the nontrivial periodic
orbits on the $s$-$x_1$ and $s$-$x_2$ planes, respectively, of system
\eqref{competition}. Each $P_j$, $j=1,2$, is clearly an isolated invariant set.

Assume that $(x_1(t_1),x_2(t_1))$ is any point in $X$ that satisfies 
$x_1(t_1)>0$
and $x_2(t_1)>0$. 
Consider the compact positive orbit 
$\left\{x_1(t_n),x_2(t_n)\right\}_{n\in
\Z_+}$ generated by the map $f$. Assume also that 
$$
\liminf_{n\rightarrow\infty}x_1(t_n) = 0 \quad \mbox{or} \quad
\liminf_{n\rightarrow\infty}x_2(t_n) = 0.
$$
Then either
\begin{enumerate}
\item[(a)] there is a subsequence such that
$$
\lim_{k\rightarrow\infty}x_1(t_{n_k}) = 0 \quad \mbox{and} \quad
\lim_{k\rightarrow\infty}x_2(t_{n_k}) > 0, \qquad \mbox{or}
$$
\item[(b)] there is a subsequence such that
$$
\lim_{k\rightarrow\infty}x_1(t_{n_k}) > 0 \quad \mbox{and} \quad
\lim_{k\rightarrow\infty}x_2(t_{n_k}) = 0.
$$
\end{enumerate}

In case (a), we must have
\begin{eqnarray*}
P_2 \in \omega\left(\left\{x_1(t_n),x_2(t_n)\right\}_{n\in \Z_+}\right).
\end{eqnarray*}
However, since $\Lambda_{21}>1$, the stable manifold of $P_2$ is the set
\begin{eqnarray*}
W^+(P_2) = \{(x_1,x_2):x_1=0,x_2>0\}.
\end{eqnarray*}
Since $x_1(t_{n_k})>0$ for all $k$,
\begin{eqnarray*}
\left\{x_1(t_n),x_2(t_n)\right\}_{n\in \Z_+} \subseteq
W^+_{\mbox{w}}(P_2)\backslash
W^+(P_2).
\end{eqnarray*}
Hence, by Theorem 3.1 of~\cite{FS}, there exists a positive orbit
$\left\{a(t_n),b(t_n)\right\}_{n\in \Z_+}$ in
$$
\omega\left(\left\{x_1(t_n),x_2(t_n)\right\}_{n\in \Z_+}\right)
$$ 
such that
$(a(t_1),b(t_1))\neq P_2$ and 
$$
\left\{a(t_n),b(t_n)\right\}_{n\in \Z_+}\subseteq W^+(P_2).
$$
Hence $a(t_n)=0$ for all $n$. It follows that the omega limit set of
$\left\{x_1(t_n),x_2(t_n)\right\}_{n\in \Z_+}$ is a subset of $W^+(P_2)$.

The orbit $\left\{x_1(t_n),x_2(t_n)\right\}_{n\in \Z_+}$  is a
pseudo-asymptotic orbit of $f$, so by Lemma 2.3 in~\cite{HSZ} the omega limit 
set
is nonempty, compact and invariant. This set cannot include the portion 
of
the  $x_2$
axis above $P_2$, since it is unbounded. 

Consider the set
$$
M = \left\{(0,x_2):\hat x_2\leq x_2\leq \bar x_2\right\}.
$$
Clearly $f(M)\subset M$, but $M\not\subset f(M)$, since $f$ is a
non-decreasing map on $M$ and $f(0,\hat x_2)=(0,y)$ where $y>\hat x_2$. 
Thus
$M$ is not an invariant set.

The only other invariant set in $W^+(P_2)$ is $P_2$
itself. Thus
\begin{eqnarray*}
\omega\left(\left\{x_1(t_n),x_2(t_n)\right\}_{n\in \Z_+}\right)=P_2.
\end{eqnarray*}
However, this implies that 
$$
\left\{x_1(t_n),x_2(t_n)\right\}_{n\in
\Z_+} \subset W^+(P_2),
$$
which is a contradiction.
Thus case (a) is impossible.

Case (b) can be ruled out in a similar fashion.

Hence, for any point $(x_1(t_1),x_2(t_1))$ with $x_1(t_1)>0$, $x_2(t_1)>0$,
we have
$$
\liminf_{n\rightarrow\infty}x_1(t_n)>0 \qquad \mbox{and} \qquad
\liminf_{n\rightarrow\infty}x_2(t_n)>0.
$$\end{proof}

\begin{proof}[Proof of Lemma \ref{lem:conservation}]
Assume that $d_j=0, \ i=1,...n.$  Then, adding together all the equations in 
\eqref{competition}, it follows that between impulses 
$\left(s+\sum_{j=1}^n x_j\right)'(t)=0.$ 
Therefore, for each $k\in\mathbb{N}$, we can define a constant $c_k$ such that 
\begin{align*}
s(t)+\sum_{j=1}^n x_j(t) &=
c_k\end{align*} for $t_k < t < t_{k+1}.$ At the moments of impulse, we have 

\begin{align*}
c_{k+1} &= s(t_{k+1}^+) +\sum_{j=1}^n x_j(t_{k+1}^+)\\
&= rs^\In +(1-r)s(t_{k+1}^-)+(1-r)\sum_{j=1}^n x_j(t_{k+1}^-)\\ 
&= rs^\In + (1-r)c_k,
\end{align*}
a recurrence relation that  has the  general solution
\begin{align*}
c_k &=(1-r)^{k} c_1+  r s^\In ( 1 + (1-r) + (1-r)^2 + \cdots + 
(1-r)^{k-1}),   
\\
  &= (1-r)^{k}c_1 + s^\In (1-(1-r)^{k}), \ \ k\in \mathbb{N}. 
\end{align*}
Therefore,
$ \ \lim_{k\to \infty} c_k = s^\In, \  $
so  it follows that  
$ \ s(t)+\sum_{j=1}^nx_j(t)\to s^\In \ $
as $ \ t \rightarrow \infty.$\end{proof}

\begin{proof}[Proof of Proposition \ref{prop:domination}]
Assume  without loss of generality that $j=1,k=2$ and that $x_1(0)>0$, $x_2(0)>0$. By Proposition \ref{prop:constants}, $x_1(t)>0$ and $x_2(t)>0$ for all $t$, and there is an infinite sequence of impulse times $\{t_\ell\}_{\ell\in\mathbb{N}}$. Thus, the ratio $\frac{x_2(t)}{x_1(t)}$ is well defined. At the moments of impulse, we have $$\frac{x_2(t_\ell^+)}{x_1(t_\ell^+)} = \frac{(1-r)x_2(t_\ell^-)}{(1-r)x_1(t_\ell^-)} = \frac{x_2(t_\ell^-)}{x_1(t_\ell^-)}$$
by equation \eqref{competition}. For $t\in(t_\ell,t_{\ell+1})$ we have $x_i(t) = x_i(t_\ell^+)e^{\int_{t_\ell}^t f_i(s(\xi))-d_i d\xi}$ for $i\in\{1,2\}$ and therefore 
$$ \frac{x_2(t_{\ell+1}^+)}{x_1(t_{\ell+1}^+)} = \frac{x_2(t_\ell^+)}{x_1(t_\ell^+)}e^{\int_{t_\ell}^{t_{\ell+1}} (f_2(s(t))-d_2)-(f_1(s(t))-d_1) dt}.$$
Since $f_1(s)-d_1 > f_2(s)-d_2$ for all $s\in(\bar{s},\bar{s}^+)$, the exponential factor is strictly less than 1. Thus, $$\frac{x_2(t_\ell^+)}{x_1(t_\ell^+)} \to 0 $$ as $\ell\to \infty$. 
\end{proof}

\begin{proof}[Proof of Theorem \ref{thm:coexistence}]
Let $\Gamma^+ = \{(x_1,x_2)\in \mathbb{R}^2_+~|~x_2 = s^\In-\bar{s}^+-x_1\}$ and $\Gamma^- = \{(x_1,x_2)\in \mathbb{R}^2_+~|~x_2 = s^\In-\bar{s}-x_1\}$. Let $\varphi:\Gamma^+\to\Gamma^-$ be the map that takes points in $\Gamma^+$ to points in $\Gamma^-$ along the flow generated by  \eqref{eq:2comp_continuous}. Define
\begin{equation}
    G(x_1) = (1-r)\left(\varphi(x_1,s^\In-\bar{s}^+-x_1)\right)_1,
\end{equation}
where $(\varphi(x_1,x_2))_1$ is the first component of $\varphi(x_1,x_2)$. Fixed points of $G$ correspond to periodic orbits with one impulse per period of system \eqref{2comp}. Note that $x_1= 0$ and $x_1 = s^\In-\bar{s}^+$ are fixed points that correspond to the periodic orbits with only $x_2$ present and only $x_1$ present, respectively. 

The dynamical system defined by iterating $G$ is a one-dimensional monotone dynamical system; by Theorem 5.6 in~\cite{hirsch2006monotone}, every orbit of this dynamical system converges to a fixed point. 

Thus, if there exists $x^*\in(0,s^\In-\bar{s}^+)$ such that $G(x^*)=x^*$, then the solution to system \eqref{2comp} with $(x_1(0),x_2(0)) = (x^*,s^\In-\bar{s}^+-x^*)$ is periodic with one impulse per period. If no such $x^*$ exists, then either $G(x)>x$ or $G(x)<x$ for all $x\in(0,s^\In-\bar{s})$. In the first case, $x_1(t_k^+)$ is increasing with $k$, and all solutions converge to the periodic orbit with $x_2=0$. In the second case, $x_1(t_k^+)$ is decreasing with $k$, and all solutions converge to the periodic orbit with $x_1 = 0$. 
\end{proof}

\section{Floquet Multipliers}\label{app:FM}

Consider the two-dimensional system
\begin{equation}
\begin{aligned}
\frac{ds}{dt} &= P(s,x), & \frac{dx}{dt} &= Q(s,x) &\qquad
(s,x) &\not\in M \\ 
\Delta s &= a(s,x), & \Delta x &= b(s,x) &\qquad (s,x) &\in M,
\end{aligned}\label{autonomy}
\end{equation}
where $t\in \R$, and $M\subset \R^2$ is the set defined by the equation
$\phi(s,x)=0$. 

Assume that  (\ref{autonomy}) has a $T$-periodic solution $\vec p(t) = 
[\gamma(t), 
\eta(t)]$ with
$$
\left|\frac{d\gamma}{dt}\right| +\left|\frac{d\eta}{
dt}\right| \neq 0.
$$
Assume further that the periodic solution $\vec p(t)$ has $q$ instants of
impulsive effect in the interval $(0,T)$. 

One of the Floquet multipliers is equal to 1, since we have a
periodic orbit. From Chapter 8 of Bainov and Simeonov~\cite{BS2}, the other is 
calculated
according to the formula 
\begin{eqnarray}
\mu = \prod_{k=1}^q{\Delta_k\exp \left[\int_0^T{\left( \frac{\partial P
}{\partial s}(\gamma(t),\eta(t)) + \frac{\partial Q}{\partial x}
(\gamma(t),\eta(t))\right) dt} \right]}, \qquad \label{mu_2} 
\end{eqnarray}
where
\begin{eqnarray*}
\Delta_k = \frac{P_+\left(\frac{\partial b }{\partial x} \frac{\partial \phi}{
\partial s} - \frac{\partial b}{\partial s} \frac{\partial \phi}{\partial
x} +
\frac{\partial \phi }{\partial s}\right) + Q_+ \left(\frac{\partial a}{\partial s}
\frac{\partial \phi}{\partial x}-\frac{\partial a}{\partial x} \frac{\partial
\phi}{\partial s} +\frac{\partial \phi }{\partial x}\right) }{
P\frac{\partial
\phi }{ \partial s} +Q \frac{\partial \phi }{ \partial
x}}.
\end{eqnarray*}
Here, $P$, $Q$, $\frac{\partial a}{\partial s}$, $\frac{\partial b }{\partial
s}$,
$\frac{\partial a }{ \partial x}$, $\frac{\partial b }{ \partial x}$,
$\frac{\partial
\phi }{ \partial s}$ and $\frac{\partial \phi }{ \partial x}$ are
computed at
the point $(\gamma(t_k),\eta(t_k))$ and 
$P_+=P(\gamma(t_k^+),\eta(t_k^+))$,
$Q_+=Q(\gamma(t_k^+),\eta(t_k^+))$.

Consider the periodic orbit on the $x_1$ face for system \eqref{competition} with $n=2$.
Denote this periodic orbit by $(\zeta(t),\xi(t),0)$. We use the notation 
$$
\zeta_0=\zeta(0^+), \ \zeta_1=\zeta(T), \ \xi_0=\xi(0^+), \ \xi_1=\xi(T).
$$
From the condition of $T$-periodicity, $\zeta_1^+=\zeta_0$ and
$\xi_1^+=\xi_0$. Thus
\begin{align*}
\zeta_0 &= \bar s^+ &\qquad \xi_0 &= (1-r)({s^\In-\bar s}) \\
\zeta_1 &= \bar s & \xi_1 &= ({s^\In-\bar s}).
\end{align*}
In particular,
\begin{eqnarray*}
\xi_1 = \frac{1}{1-r}\xi_0,
\end{eqnarray*}
and we have the relationship 
\begin{eqnarray}
\zeta(t)+\xi(t) = s^\In \label{s^i_relationship}
\end{eqnarray}
by Lemma \ref{lem:conservation}.

We thus have the two-dimensional system 
\begin{equation}
\begin{aligned}
\frac{ds}{dt} &= - x_1f_1(s) - \left(s^\In-s-x_1\right)f_2(s)
&\qquad s &\neq \bar s \\ 
\frac{dx_1}{dt} &= x_1f_1(s) & s &\neq \bar s \\ 
\Delta s &= -r\bar s+rs^\In & s &= \bar s \\
\Delta x_1 &= -rx_1 & s &= \bar s.
\end{aligned}\label{twosx}
\end{equation}

Using impulsive Floquet theory and (\ref{s^i_relationship}), we have
\begin{align*}
P &= -\frac{1}{1-r}\xi_0 f_1(\bar s) &\qquad P_+ &= -\xi_0 f_1(\bar s^+)
\\  
Q &= \frac{1}{1-r}\xi_0 f_1(\bar s) & Q_+ &= \xi_0 f_1(\bar s^+) \\
\frac{\partial b}{\partial x_1} &= -r & \frac{\partial
\phi}{\partial s} &= 1 \\
\frac{\partial b}{\partial s} &= 0 & \frac{\partial \phi}{\partial x_1} 
&=
0 \\
\frac{\partial a}{\partial s} &= 0 & \frac{\partial a}{\partial
x_1} &= 0.
\end{align*}

Thus
\begin{align*}
\Delta_1 &= \frac{-\xi_0f_1(\bar s^+)\left(-r\cdot
1-0\cdot0+1\right)+\xi_0f_1(\bar s^+)\cdot0}{-\frac{1}{ 
1-r}\xi_0f_1(\bar
s)+\frac{1}{1-r}\xi_0f_1(\bar s)\cdot0} \\ 
&= (1-r)^2\frac{f_1(\bar s^+)}{f_1(\bar s)}. 
\end{align*}
Then, using (\ref{s^i_relationship}), we have
\begin{align*}
\int_0^T{\left[\frac{\partial P }{\partial s}
\left(\zeta(t),\xi(t)\right)+\frac{\partial Q}{\partial
x_1}\left(\zeta(t),\xi(t)\right)\right]dt}
&= \int_0^T{\left[-\xi f_1'(\zeta) +
f_2(\zeta) -\left(s^\In-\zeta-\xi\right)f_2'(\zeta)+f_1(\zeta)\right]dt} \\
&= \int_0^T{\left[-\xi f_1'(\zeta) +
f_1(\zeta) +f_2(\zeta)\right]dt} \\
&= \int_{0}^{T}{\left[\frac{f_1'(\zeta)}{f_1(\zeta)}\zeta' +
\frac{\xi' }{\xi} + f_2(\zeta)\right]dt} \\
&= \int_{\bar s^+}^{\bar s}{\frac{f_1'(\zeta)}{
f_1(\zeta)}d\zeta} + \int_{\xi_0}^{\frac{1}{1-r}\xi_0}{\frac{d\xi}{\xi}} +
\int_0^T{f_2(\zeta)dt} \\ 
&= \ln\left(\frac{f_1(\bar s)}{f_1(\bar s^+)}\right) + \ln \frac{1}{1-r} +
\int_0^T{f_2(\zeta)dt}.
\end{align*}

Now
\begin{align*}
\int_0^T{f_2(\zeta)dt} &=  \int_0^T{\frac{f_2(\zeta)}{-\xi
f_1(\zeta)}\zeta'dt} \\
&= -\int_{\bar s^+}^{\bar s}{\frac{f_2(\zeta) }{
f_1(\zeta)(s^\In-\zeta)} d\zeta} \\
&= \int_{\bar s}^{\bar s^+}{\frac{m_2(K_1+\zeta)}{m_1(K_2+\zeta)(s^\In-\zeta)}d\zeta} \\
&= \frac{m_2}{m_1}\int_{\bar s}^{\bar s^+}{\left[\frac{K_1+s^\In}{(K_2+s^\In)(s^\In-\zeta)}+\frac{K_1-K_2}{(K_2+s^\In)(K_2+\zeta)} \right]d\zeta},
\end{align*}
using partial fraction decomposition. Therefore
\begin{align*}
\int_0^T{\!\!f_2(\zeta)dt} &= \left[-\frac{m_2(K_1+s^\In)}{
m_1(K_2+s^\In)}
\ln (s^\In-\zeta) + \frac{m_2(K_1-K_2)}{m_1(K_2+s^\In)} \ln(K_2+\zeta)\right]_{\bar s}^{\bar s^+} \\ 
&= -\frac{m_2(K_1+s^\In)}{m_1(K_2+s^\In)} \ln (1-r) + \frac{m_2(K_1-K_2)}{m_1(K_2+s^\In)}
\ln\left(\frac{K_2+\bar s^+}{K_2+\bar s}\right) \\
&= \frac{m_2(K_1+s^\In)}{m_1(K_2+s^\In)} \ln \frac{1}{1-r} + 
\frac{m_2(K_1-K_2)}{m_1(K_2+s^\In)} \ln\left(\frac{K_2+\bar s^+}{K_2+\bar
s}\right). 
\end{align*}

Denote the second Floquet multiplier for the periodic orbit on the
$x_1$-axis by $\Lambda_{12}$ and the one on the $x_2$-axis by 
$\Lambda_{21}$.
We thus have $$\Lambda_{12}=
(1-r)^2\frac{f_1(\bar s^+)}{f_1(\bar s)}
\cdot \frac{f_1(\bar s)}{f_1(\bar s^+)}
\cdot \frac{1}{1-r}
\cdot \left(\frac{1}{1-r}\right)^{\frac{m_2(K_1+s^\In)}{m_1(K_2+s^\In)}} \cdot
\left(\frac{K_2+\bar s^+}{K_2+\bar
s}\right)^{\frac{m_2(K_1-K_2)}{ m_1(K_2+s^\In)}} 
$$
\begin{eqnarray}
\Lambda_{12} = \left(\frac{1}{1-r}\right)^{\frac{m_2(K_1+s^\In)}{m_1(K_2+s^\In)}-1}
\cdot \left(\frac{K_2+\bar s^+}{K_2+\bar
s}\right)^{\frac{m_2(K_1-K_2)}{ m_1(K_2+s^\In)}}. \label{mu2x}
\end{eqnarray}

By an identical process applied to the orbit $(\zeta(t),0,\nu(t))$, we
have the the symmetric result
\begin{eqnarray}
\Lambda_{21} = \left(\frac{1}{1-r}\right)^{\frac{m_1(K_2+s^\In)}{
m_2(K_1+s^\In)}-1}
\cdot \left(\frac{K_1+\bar s^+}{K_1+\bar
s}\right)^{\frac{m_1(K_2-K_1)}{m_2(K_1+s^\In)}}. \label{mu2y}
\end{eqnarray}

Note that we can calculate these Floquet multipliers only because the system reduces to a two-dimensional one in each case.

\end{document}